\documentclass{article}

\usepackage[english]{babel}

\usepackage[letterpaper,top=2cm,bottom=2cm,left=3cm,right=3cm,marginparwidth=1.75cm]{geometry}

\usepackage{amsmath}
\usepackage{graphicx}
\usepackage[colorlinks=true, allcolors=blue]{hyperref}

\usepackage{amsfonts}
\usepackage{amsmath}
\usepackage{amsfonts}
\usepackage{amsthm}

\usepackage{subfigure}
\usepackage{url}
\usepackage{authblk}

\newtheorem{lemma}{Lemma}

\title{Phylogenetic Inference under the Balanced Minimum Evolution Criterion via Semidefinite Programming}

\author[1,$\dagger$]{P. Skums}

\affil[1]{School of Computing, University of Connecticut, Storrs, CT, USA}
\affil[$\dagger$]{Email: pavel.skums@uconn.edu}

\date{}

\begin{document}
\maketitle

\begin{abstract}
In this study, we investigate the application of Semidefinite Programming (SDP) to phylogenetics. SDP is a powerful optimization framework that seeks to optimize a linear objective function over the cone of positive semidefinite matrices. As a convex optimization problem, SDP generalizes linear programming and provides tight relaxations for many combinatorial optimization problems. However, despite its many applications, SDP remains largely unused in computational biology.

We argue that SDP relaxations are particularly well suited for phylogenetic inference. As a proof of concept, we focus on the Balanced Minimum Evolution (BME) problem, a widely used model in distance-based phylogenetics. We propose an algorithm combining an SDP relaxation with a rounding scheme that iteratively converts relaxed solutions into valid tree topologies. Experiments on simulated and empirical datasets show that the method enables accurate phylogenetic reconstruction.

The code is available at \url{https://github.com/compbel/SDPTree}.
\end{abstract}

\section{Introduction}

Phylogenetics plays a fundamental role in modeling evolutionary relationships and population dynamics across a wide range of domains, including molecular epidemiology, cancer genomics, developmental biology and genomic anthropology. A common feature of almost all phylogenetic models is that the corresponding optimization problems are NP-hard \cite{day1987computational,chor2005maximum}. Consequently, the most widely used algorithmic approaches to phylogenetic inference have relied on heuristic methods based on local search or Markov chain Monte Carlo (MCMC) sampling. 

In these approaches, the tree space is explored iteratively by modifying the current tree through subtree rearrangement operations that alter the tree topology. This strategy underlies many widely used phylogenetic
inference tools and has proven effective in practice, making such heuristics the backbone of modern
phylogenetic analysis.

Nevertheless, the ability of these methods to reach optimal solutions or converge to the true posterior distribution remains uncertain, even for moderately sized datasets \cite{spirin2024phylobench,zhou2018evaluating,liu2024influence,fabreti2022convergence}. This difficulty largely arises from the superexponential size and ruggedness of the solution space, together with a strong dependence on the initial solution. 

Given the aforementioned challenges, alternative approaches to phylogenetic inference have been explored. The most prominent of them is Integer Linear Programming (ILP) \cite{gusfield2019integer}. ILP offers several advantages, including their ability to find optimal solutions, relative ease of expressing many problems in ILP form, and the availability of efficient commercial and open-source solvers. However, ILP-based methods face significant scalability limitations, which make them practical only for small-sized problems. One of major reasons for this is that most ILP formulations require large numbers of auxiliary variables simply to ensure that the resulting solution encodes a valid tree with the desired properties. For the same reason, straightforward approximation strategies, such as linear programming (LP) relaxation followed by rounding, are often ineffective, as satisfying all constraints after rounding is exceedingly difficult.

In this study, we introduce an alternative mathematical programming approach to phylogenetic inference based on {\it Semidefinite Programming (SDP)}. SDP is a powerful optimization methodology that involves minimizing a linear objective function subject to linear constraints over the cone of positive semidefinite matrices \cite{vandenberghe1996semidefinite}. It is a nonlinear convex optimization problem that can be solved in polynomial time using interior-point methods. SDP generalizes LP and provides tight relaxations for a wide range of combinatorial optimization problems \cite{lovasz2003semidefinite}. It has been successfully applied in many fields, ranging from engineering \cite{wolkowicz2012handbook} to quantum computing \cite{skrzypczyk2023semidefinite}. However, aside from a few isolated applications in haplotyping \cite{kalpakis2005haplotype}, protein structure determination \cite{alipanahi2013determining}, and protein docking \cite{marcia2007global}, the use of SDP in computational biology remains highly limited.

Here, we argue that SDP relaxations are particularly well suited for phylogenetic inference. A phylogenetic tree can naturally be viewed as a hierarchy of cuts. Semidefinite relaxations have proven especially effective for cut-based optimization problems; a well-known example is the celebrated Goemans–Williamson $0.878$-approximation algorithm for the MaxCut problem \cite{goemans1995improved}. Motivated by this connection, we suggest that SDP provides a convenient framework for representing tree-like structures within convex relaxations, which can then be efficiently rounded to obtain discrete tree solutions. 

As a proof of concept, we consider the phylogenetic inference problem under the Balanced Minimum Evolution (BME) model. This model relies on a specific branch-length estimation approach introduced in \cite{pauplin2000direct}. The resulting inference problem is equivalent to identifying the phylogeny that minimizes a weighted least-squares discrepancy between a given dissimilarity matrix $D$ and the corresponding tree distances. The weights decrease exponentially with the number of internal nodes separating taxa, reflecting the increasing variance associated with larger evolutionary distances \cite{desper2004theoretical}.

The BME phylogenetic model is one of the most widely used approaches in distance-based phylogenetics. It has been extensively studied theoretically \cite{desper2002fast,desper2004theoretical,pauplin2000direct,semple2004cyclic,forcey2016facets}, implemented in widely used software tools \cite{lefort2015fastme}, and applied in numerous empirical studies \cite{wick2021trycycler,edger2019origin,librado2021origins}. The model also has several important features:  it is statistically consistent \cite{semple2004cyclic,desper2004theoretical}; the classical Neighbor-Joining (NJ) algorithm is essentially as a greedy heuristic for optimizing the BME objective \cite{gascuel2006neighbor}; and in practice it often achieves higher accuracy than alternative phylogenetic methods on real data \cite{spirin2024phylobench}.

BME tree inference problem is NP-hard and, moreover, cannot be approximated within a factor of $c^n$ for a constant $c>1$ \cite{fiorini2012approximating,frohn2021approximability}. Consequently, a variety of exact and heuristic algorithmic approaches have been explored, including local search  \cite{desper2002fast,gasparin2023evolution,lefort2015fastme} and ILP \cite{forcey2016facets,catanzaro2020balanced,catanzaro2012balanced,catanzaro2026optimizing,catanzaro2025new}.

 In this paper, we present a novel algorithmic approach to BME problem based on solving an SDP relaxation followed by rounding scheme for transforming relaxed solutions to actual tree solutions. The relaxation preserves the dyadic weighting structure of BME, captures hierarchical nesting of clades through semidefinite constraints, and remains convex, enabling efficient solution via interior-point methods. Using simulated and empirical data, we demonstrate that the resulting algorithm, which we call {\it SDPTree}, allows for accurate and scalable phylogenetic reconstruction. The proposed scheme is general enough to be extendable to other phylogenetic inference problems.

 \section{Methods}

 \subsection{Matrix formulation of the BME problem}

Consider $n$ taxa together with their $n \times n$ dissimilarity matrix $D=(d_{ij})_{i,j=1}^n$. Let $\mathcal{T}_n$ denote the set of all full binary phylogenetic trees (i.e., trees in which every vertex has degree either $1$ or $3$) with leaf set $\{1,\ldots,n\}$. The Balanced Minimum Evolution (BME) problem can be formulated in several equivalent ways \cite{desper2004theoretical,pauplin2000direct}. One formulation that is particularly convenient for algorithmic purposes can be written compactly as

\begin{equation}\label{eq:BMEobj}
\min_{T\in \mathcal{T}_n} F(T) = \sum_{1 \le i,j \le n} d_{ij}2^{-\delta_{ij}},
\end{equation}

\noindent
where $\delta_{ij}$ denotes the topological distance between leaves $i$ and $j$, i.e., the number of edges on the unique $(i,j)$-path in $T$.

For our purposes, we derive an equivalent formulation in which the objective is written in matrix form. The formulation is most
naturally defined for a rooted tree $T$; the unrooted objective can be obtained from the rooted one by a simple adjustment (see below).

We denote by $\rho_u$ the depth (distance from the root) of
a vertex $u\in V(T)$. Let $\sigma_{ij}$ be the depth of the
lowest common ancestor of leaves $i,j\in [n]$, and let
$K=\max_{i\in[n]}\rho_i$ denote the height of the tree.
Then $\delta_{ij}=\rho_i+\rho_j-2\sigma_{ij}$.

Using the notation
$\zeta=(2^{-\rho_i})_{i=1}^n$
and
$\Sigma=(4^{\sigma_{ij}})_{i,j=1}^n$,
the formulation (\ref{eq:BMEobj}) can be written as

\begin{equation}\label{eq:BMEobjMatr}
\min_{T\in \mathcal{T}_n} F(T)
= \langle D, \zeta\zeta^{\top}\circ\Sigma\rangle.
\end{equation}

\noindent
Here $\langle A,B\rangle$ denotes the matrix inner product and
$\circ$ the Hadamard (element-wise) product. The variables
$\zeta_i$ will be referred to as {\it dyadic weights}.

Denote by $\mathbb{S}^n$ the set of real symmetric $n\times n$
matrices. A matrix $M\in\mathbb{S}^n$ is {\it positive semidefinite}
(PSD, denoted $M\succeq0$) iff any of the following equivalent
conditions holds: (a) $w^{\top}Mw\ge0$ for all
$w\in\mathbb{R}^n$; (b) $M$ is a {\it Gram matrix},
i.e. $M=L^{\top}L$ for some $L\in\mathbb{R}^{m\times n}$;
(c) all eigenvalues of $M$ are nonnegative.

Consider the matrices $B^{(0)},\dots,B^{(K)}$ defined by

\begin{equation}
B^{(k)}_{ij} =
\begin{cases}
1, & \sigma_{ij},\rho_i,\rho_j \ge k, \\
0, & \text{otherwise}.
\end{cases}
\end{equation}

That is, if $R_k=\{u\in V(T):\rho_u=k\}$ is the set of vertices
at depth $k$, then each $u\in R_k$ defines a cluster $L_u$ of
descendant leaves; $B^{(k)}_{ij}$ is the indicator of whether
$i$ and $j$ belong to the same cluster at level $k$. In particular,
$B^{(0)}=\mathbf{1}_n\mathbf{1}_n^{\top}$ is the all-ones matrix.


\begin{lemma}\label{lem:PSD}
(i) $B^{(k)}\succeq0$ for $k=0,\dots,K$.

(ii) $\Sigma=\sum_{k=0}^K \beta_k B^{(k)}\succeq0$, where
$\beta_0=1$ and $\beta_k=3\cdot4^{k-1}$ for $k\ge1$.
\end{lemma}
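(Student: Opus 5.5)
For part (i), I will exhibit $B^{(k)}$ explicitly as a Gram matrix. For each vertex $u\in R_k$, let $\chi_u\in\{0,1\}^n$ be the indicator vector of its leaf cluster $L_u$. I claim that
\[
B^{(k)}=\sum_{u\in R_k}\chi_u\chi_u^{\top}.
\]
To verify this, note that the clusters $L_u$ for $u\in R_k$ are pairwise disjoint, since distinct vertices at the same depth have disjoint descendant sets. Hence the $(i,j)$ entry of the right-hand side is $1$ exactly when $i$ and $j$ lie in a common cluster at depth $k$, and $0$ otherwise.

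This matches the definition of $B^{(k)}$. The condition $\sigma_{ij}\ge k$ means the lowest common ancestor of $i,j$ has depth at least $k$. Its ancestor at depth $k$ (which exists, and lies on the root path) is then a common ancestor $u\in R_k$ of both leaves. Conversely, if $i,j\in L_u$ with $\rho_u=k$, then $u$ is a common ancestor, so $\sigma_{ij}\ge k$ and $\rho_i,\rho_j\ge k$. The diagonal case $i=j$ is included, with $\sigma_{ii}=\rho_i$. Each term $\chi_u\chi_u^{\top}$ is PSD, so $B^{(k)}\succeq 0$; equivalently, $w^{\top}B^{(k)}w=\sum_{u\in R_k}(\chi_u^{\top}w)^2\ge 0$.

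For part (ii), I compare entries. For fixed $i,j$, note that $\rho_i,\rho_j\ge\sigma_{ij}$ always holds, so $B^{(k)}_{ij}=1$ if and only if $k\le\sigma_{ij}$. Therefore
\[
\Big(\sum_{k=0}^K\beta_kB^{(k)}\Big)_{ij}=\sum_{k=0}^{\sigma_{ij}}\beta_k=1+3\sum_{k=1}^{\sigma_{ij}}4^{k-1}=1+(4^{\sigma_{ij}}-1)=4^{\sigma_{ij}}=\Sigma_{ij},
\]
using the geometric sum. Also $\sigma_{ij}\le K$, so no terms are missing from the sum over $k$.

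Positive semidefiniteness of $\Sigma$ then follows because the PSD cone is closed under nonnegative combinations and every $\beta_k>0$. The only point needing care is the identification in part (i) of the entrywise definition of $B^{(k)}$ with the cluster decomposition, in particular for the diagonal entries and for leaves of depth less than $k$ (which belong to no cluster at depth $k$, so their rows are zero). This is straightforward once disjointness of same-depth clusters is noted.
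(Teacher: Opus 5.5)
Your proof is correct and follows the paper's argument. Part (i) is the same Gram factorization $B^{(k)}=(U^{(k)})^{\top}U^{(k)}$ via cluster indicator vectors, and part (ii) is the same telescoping identity $\Sigma_{ij}=1+\sum_{k\ge1}(4^k-4^{k-1})B^{(k)}_{ij}$ combined with closure of the PSD cone under nonnegative combinations; you also spell out two facts the paper leaves implicit, namely that same-depth clusters are disjoint and that $B^{(k)}_{ij}=1\iff k\le\sigma_{ij}$.
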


\begin{proof}
    
For each $k$, let $U^{(k)}$ be the $|R_k|\times n$ matrix whose
rows are the indicator vectors of clusters $L_u$, $u\in R_k$.
Then $B^{(k)}=(U^{(k)})^{\top}U^{(k)}$, and hence $B^{(k)}\succeq0$.
    
 Relation (ii) follows from the observation that for any
$i,j\in[n]$ the entry $\Sigma_{ij}$ admits the telescoping
representation
\begin{equation}\label{eq:PSDU}
\Sigma_{ij}=1+\sum_{k=1}^{K}(4^k-4^{k-1})B_{ij}^{(k)}.
\end{equation}
Thus $\Sigma$ is a positive linear combination of PSD
matrices and hence PSD.
\end{proof}

Using Lemma \ref{lem:PSD}, the rooted BME problem can be formulated as
\begin{equation}\label{eq:matrformBME}
\min_{T\in \mathcal{T}_n} F(T)
= \sum_{k=0}^H \beta_k\langle D,\zeta\zeta^{\top}\circ B^{(k)} \rangle .
\end{equation}

\noindent
Since $B^{(k)}\succeq0$ and $\zeta\zeta^{\top}\succeq0$,
their Hadamard product is also PSD by the
Schur product theorem; hence
$\zeta\zeta^{\top}\circ B^{(k)}\succeq0$.

For the unrooted case, note that distances in $T$ and in any rooting $T'$
coincide unless $i$ and $j$ lie on different sides of the root,
in which case $2^{-\delta_{ij}}=2\cdot2^{-\delta'_{ij}}$.
Thus the unrooted objective is obtained by setting
$\beta_0 = \beta_1 =2$ and using (\ref{eq:matrformBME}).

\subsection{SDP relaxation of the BME problem}

Recall that SDP is the problem of minimizing a linear function
$\langle C,X\rangle$ of a PSD matrix variable $X\succeq0$
subject to linear constraints $\langle A_i,X\rangle\le b_i$,
$i=1,\dots,m$. Thus (\ref{eq:matrformBME}) is not a valid SDP
objective, since both $\zeta$ and $B^{(k)}$ are variables.
Therefore we introduce additional variables relaxing the products $\zeta\zeta^{\top}\circ B^{(k)}$.

\paragraph{Variables.}

Let $K\ge1$ be a prescribed upper bound on the tree height
(in the simplest case $K=n-1$). We introduce the following variables:

\begin{itemize}
\item $P\in[0,1]^{n\times K}$: relaxed depth assignments.
Each row represents a distribution over depths
$\rho_i\in\{1,\dots,K\}$, i.e.\ $P_{i,d}=\Pr(\rho_i=d)$.

\item $Z\in\mathbb{S}^n$: a matrix variable relaxing
the rank-one matrix $\zeta\zeta^{\top}$.

\item $Y^{(k)}\in\mathbb{S}^n$, $k=0,\dots,K$:
level-wise PSD matrices approximating
$\zeta\zeta^{\top}\circ B^{(k)}$ and encoding the
hierarchical partition of taxa induced by the tree.
\end{itemize}

To simplify notation, we also introduce auxiliary variables expressible in terms of the main variables. These are not essential for the optimization itself but make the constraints more compact and easier to state.

\begin{itemize}
    \item $z\in\mathbb{R}^n$ and $s\in\mathbb{R}^n$: dyadic leaf
    weights and their second moments, approximating
    $z_i\approx\zeta_i=2^{-\rho_i}$ and $s_i\approx\zeta_i^2$.
    They are coupled with $P$ via
    \begin{align}\label{constr:zdef}
    z_i &= \mathbb{E}[2^{-\rho_i}]
    = \sum_{d=1}^K 2^{-d} P_{i,d}, &
    s_i &= \mathbb{E}[2^{-2\rho_i}]
    = \sum_{d=1}^K 2^{-2d} P_{i,d}.
    \end{align}      

    \item $z^{(k)}, s^{(k)}, q^{(k)} \in \mathbb{R}^n$:
    truncated moments restricted to tree levels $\ge k$:
    \begin{align}
    z_i^{(k)} &= \mathbb{E}\!\left[2^{-\rho_i}\mathbb{I}\{\rho_i\ge k\}\right]
              = \sum_{d=k}^K 2^{-d}P_{i,d}, \label{constr:zk}\\
    s_i^{(k)} &= \mathbb{E}\!\left[2^{-2\rho_i}\mathbb{I}\{\rho_i\ge k\}\right]
              = \sum_{d=k}^K 2^{-2d}P_{i,d}, \label{constr:sk}\\
    q_i^{(k)} &= \mathbb{E}\!\left[\mathbb{I}\{\rho_i\ge k\}\right]
              = \sum_{d=k}^K P_{i,d}. \label{constr:qk}
    \end{align}
\end{itemize}

\paragraph{Probability distribution constraint:}

\begin{align}
P\mathbf{1}_K &= \mathbf{1}_n \label{constr:onedepth} 
\end{align}

\paragraph{Kraft constraints:}

\begin{align}
z^{\top}\mathbf{1}_n &= 1 \label{constr:kraft}\\
Z\mathbf{1} &= z, \label{constr:Z-rowsum} \\
Y^{(k)} \mathbf{1}_n &= 2^{-k} z^{(k)}, \qquad k=0,...,K \label{constr:kraftlevel}
\end{align}

Constraint (\ref{constr:kraft}) relaxes the Kraft equality
$\sum_{i=1}^n \zeta_i=1$ for leaf depths in a full binary tree
\cite{catanzaro2012balanced,cover1999elements}.
Constraint (\ref{constr:Z-rowsum}) relaxes its corollary
$\sum_{j}\zeta_i\zeta_j=\zeta_i$ by replacing
$\zeta_i\zeta_j$ and $\zeta_i$ with $Z_{ij}$ and $z_i$.
Constraint (\ref{constr:kraftlevel}) is the subtree version,
$\sum_j B^{(k)}_{ij}\zeta_j=2^{-k}$, multiplied by $\zeta_i$
and relaxed via $B^{(k)}_{ij}\zeta_i\zeta_j\mapsto Y^{(k)}_{ij}$.

\paragraph{PSD and non-negativity constraints}

\begin{align}
Y^{(k)},Z &\succeq 0, \label{eq:Y-psd}\\
Y^{(k)}_{ij}, Z_{ij} &\ge 0 \qquad i,j\in [n]. \label{eq:Y-nonneg}
\end{align}

\paragraph{Diagonal constraints}

\begin{align}
\operatorname{diag}(Z) &= s, \label{constr:diagZ} \\
\operatorname{diag}(Y^{(k)}) &= s^{(k)}, \qquad k\in[K]. \label{constr:diagY}
\end{align}

\noindent
These relax the identities
$(\zeta\zeta^{\top})_{ii}=2^{-2\rho_i}$ and
$(\zeta\zeta^{\top}\!\circ B^{(k)})_{ii}
=2^{-2\rho_i}\mathbf{1}\{\rho_i\ge k\}$.

\paragraph{Cluster nestedness}
\begin{equation}
Y^{(k+1)}_{ij} \le Y^{(k)}_{ij}
\qquad i,j\in [n],\ k=0,\dots,K-1.
\end{equation}

This relaxes the requirement that clusters form a nested hierarchy: if two leaves are separated at level $k$, they
remain separated at all deeper levels.

\paragraph{Shor lifting}

\begin{align}
\begin{pmatrix}
1 & z^\top \\
z & Z
\end{pmatrix} &\succeq 0. \label{constr:shor} \\
\begin{pmatrix}
q_i^{(k)} & z_i^{(k)}\\
z_i^{(k)} & s_i^{(k)}
\end{pmatrix} &\succeq 0, \qquad i \in [n],\ k=0,\dots,K. \label{constr:2x2}
\end{align}

Constraint (\ref{constr:shor}) is the \emph{Shor semidefinite lifting}
\cite{ben2001lectures}, equivalent via the Schur complement
\cite{boyd2004convex} to $Z-zz^\top\succeq0$, and thus a convex
relaxation of $Z=zz^\top$. Constraint (\ref{constr:2x2}) implies
$(z_i^{(k)})^2\le q_i^{(k)}s_i^{(k)}$, the Cauchy--Schwarz
inequality applied to (\ref{constr:zk})--(\ref{constr:qk}).

\paragraph{McCormick envelope}
\begin{align}
Z_{ij} &\le  \frac{1}{2}z_i + \frac{1}{2^L} z_j - \frac{1}{2^{L+1}},\label{constr:mccorm1}\\
Z_{ij} &\le \frac{1}{2^L} z_i + \frac{1}{2} z_j - \frac{1}{2^{L+1}},\label{constr:mccorm2}\\
Z_{ij} &\ge \frac{1}{2^L} z_i + \frac{1}{2^L} z_j - \frac{1}{2^{2L}},\label{constr:mccorm3}\\
Z_{ij} &\ge \frac{1}{2} z_i + \frac{1}{2} z_j - \frac{1}{4}, 1\leq i<j\leq n \label{constr:mccorm4}
\end{align}

Constraints (\ref{constr:mccorm1})--(\ref{constr:mccorm4}) form the {\it McCormick envelope} \cite{mccormick1976computability},
a standard convex relaxation for bilinear relations such as $Z_{ij}=z_i z_j$.

\paragraph{Additional coupling constraints}

\begin{align}
Y^{(0)} &= Z, \label{constr:Z}\\
Y^{(k)}_{ij} &\le Z_{ij},
\qquad i,j\in[n],\ k=0,\dots,K, \label{constr:ineq1}\\
Z_{ij} &\le z_i,\qquad
Z_{ij}\le z_j,
\qquad i,j\in[n]. \label{constr:ineq2}
\end{align}

\noindent
Here, (\ref{constr:Z}) relaxes
$\zeta\zeta^{\top}\!\circ B^{(0)}=\zeta\zeta^{\top}$
(since $B^{(0)}=\mathbf{1}_n\mathbf{1}_n^{\top}$),
while (\ref{constr:ineq1}) reflects
$Y_{ij}^{(k)}\approx \zeta_i\zeta_j B^{(k)}_{ij}
\le \zeta_i\zeta_j \approx Z_{ij}$.

In summary, the SDP relaxation of the BME problem is

\begin{equation}\label{obj:SDP}
\min \sum_{k=0}^K \beta_k \langle D, Y^{(k)} \rangle
\quad \text{s.t. } (\ref{constr:zdef})\text{--}(\ref{constr:ineq2}).
\end{equation}

\subsection{Rounding the relaxation}

After solving the SDP relaxation
(\ref{obj:SDP}), the fractional solution is converted into a binary tree. Rather than constructing the tree in a single step, we proceed agglomeratively. At each iteration we identify, from the SDP solution, the pair of taxa most likely to form a cherry, merge them into their parent,
update the objective accordingly, and repeat the SDP + cherry detection procedure for $n-1$ taxa until the full tree $T$ is obtained. A similar approach for an ILP relaxation was independently proposed
in \cite{catanzaro2025new}.

To estimate likely cherries we use two approaches, further referred to as {\it separability} and {\it profile} rounding heuristics, respectively (or, for short, {\it s-rounding} and {\it p-rounding}).
In the separability approach, we recover relaxed clade co-membership matrices
$B^{(k)}$ via
$B^{(k)}_{ij}=Y^{(k)}_{ij}/\sqrt{Z_{ii}Z_{jj}}$. We then define
$C_{ij}=|\{k: B^{(k)}_{ij}=\max_{u\ne v} B^{(k)}_{uv}\}|$,
i.e. $C_{ij}$ counts the number of levels at which $i$ and $j$
are strongly inseparable. Finally, we find the maximum-weight matching $(i_1,j_1),...,(i_{L'},j_{L'})$ of the size $L'\leq L$ in the weighted graph $G_C$ with weights $C$, and merge the pairs $(i_l,j_l)$, $l\in [L']$.

Alternatively, we consider
$\Delta=\sum_{k=0}^K \beta_k Y^{(k)}$.
By (\ref{eq:PSDU}), for binary solutions
\begin{equation}
\Delta_{ij}
= \sum_{k=0}^K \beta_k \zeta_i\zeta_j B_{ij}^{(k)}
= \zeta_i\zeta_j \Sigma_{ij}
= 2^{-\delta_{ij}},
\end{equation}

\noindent
so $\Delta_{ij}$ relaxes $2^{-\delta_{ij}}$.
We therefore consider the weighted graph $G_{\Delta}$ with edges weights defined as $w_{ij} = \|\Delta_{i,:}-\Delta_{j,:}\|_1$ and merge the pairs $(i_l,j_l)$, $l\in [L]$ forming the minimum-weight $L$-matching in $G_{\Delta}$. In other words, we are merging pairs whose
distance profiles to the remaining leaves are most similar.

The selected pair $(i,j)$ is replaced by its parent $p_{ij}$, and the dissimilarity matrix $D$ is updated according to the
BME objective by replacing the $i$th and $j$th rows and columns with their average. 

The matching size $L$ is the algorithm parameter with the default value $L=1$; In our implementation minimum or maximum $L$-matchings are estimated using the simple greedy algorithm. In practice, we apply both separability and profile rounding heuristics and
select the better of the two resulting trees, which is then refined using Subtree Pruning and Regrafting (SPR) local search.

The full BME inference algorithm called SDPTree was implemented in Matlab R2024b (MathWorks Inc.). The SDP relaxation was solved using MOSEK~11.1 (Mosek ApS) via the YALMIP toolbox \cite{Lofberg2004}.

 \section{Results}

 \subsection{Data and algorithms}

The proposed algorithm was validated and benchmarked on simulated
and real data. In all experiments, the input consisted of randomly generated dissimilarity matrices
$D\in\mathbb{R}^{n\times n}$ with $n=10,15,\dots,50$, produced by different models. We used four types of simulated data:

 \begin{itemize}
     \item[1)] Hamming distance matrices between $n$ binary sequences
    of length $m$. In each experiment, a phylogenetic tree topology
    was generated using the coalescent model \cite{kingman1982genealogy},
    and sequences were simulated along its branches with substitution
    rate $p$ under the Camin--Sokal model \cite{Camin1965-eh}.  

     \item[2)] Euclidean distance matrices for $n$ points randomly sampled from $\mathbb{R}^m$.

     \item[3)] Random doubly stochastic matrices (RDSM) with zero diagonals, publicly available from \cite{catanzaro2025new}.

     \item[4)] Random integer matrices (RIM), also used in \cite{catanzaro2025new}.
 \end{itemize}

In both 1) and 2) we used $m=1000$, roughly matching the order of magnitude typical for targeted amplicon or single-cell sequencing \cite{lin2024high,kozlov2022cellphy}.

For empirical data validation we used the PhyloBench dataset \cite{spirin2024phylobench}. It contains 5,847 benchmark instances, each with $n=15,30,$ or $45$ sequences from
orthologous protein domains randomly sampled from several thousand orthologous groups across 12 taxonomic groups spanning Bacteria, Archaea, and Eukaryota.

The following algorithms were used for comparison:

\begin{itemize}
\item[1)] The Neighbor-Joining (NJ) algorithm \cite{Saitou:MBE:1987}, which, as noted above, is an agglomerative heuristic for the
BME problem \cite{gascuel2006neighbor}.

\item[2)] Methods from the FastME, a widely used phylogenetic inference tool based on the BME model \cite{lefort2015fastme} and tree rearrangement-based local search. We used two variants of
its SPR-based local search, initialized with either a random phylogeny or an NJ-reconstructed phylogeny.
\end{itemize}

\subsection{Computational experiments}

\begin{figure}[ht!]
    \centering
    \includegraphics[width=\textwidth]{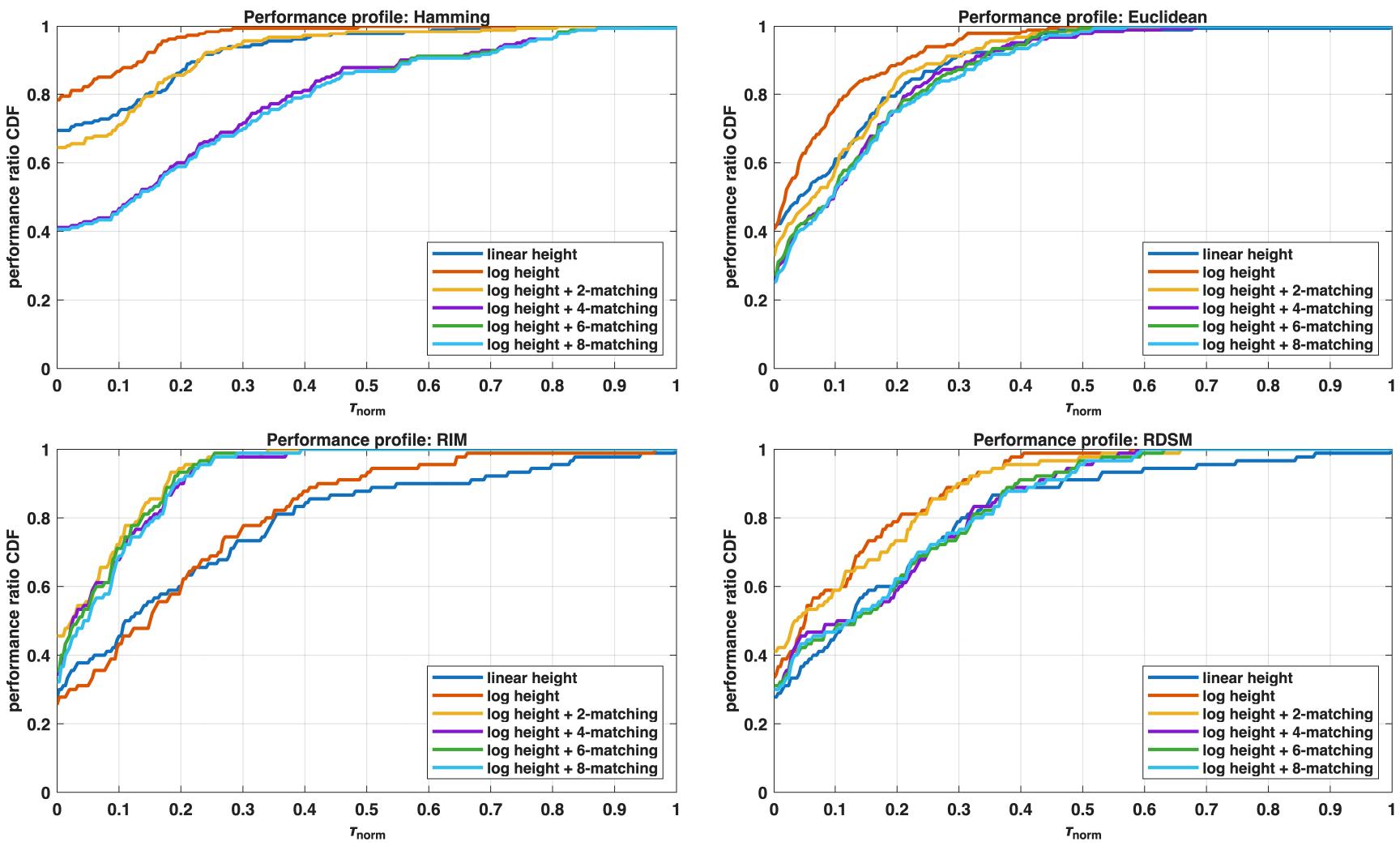} 
    \caption{Performance profiles of SDP relaxation and rounding under different algorithm settings. X-axis represents normalized performance ratio values $\tau_{norm} = \frac{\tau-1}{\tau_{\max}-1}$, $1\leq \tau \leq \tau_{\max}$.}
    \label{fig:algComparParamProf}
\end{figure}

In the first series of experiments, we compared different
parameterizations of the SDP relaxation and rounding
procedures (prior to SPR adjustment). We considered both
linear ($K = n/2$) and logarithmic ($K = 2\log n$) upper
bounds on the tree height in the SDP formulation. In the
logarithmic setting, we further evaluated different
matching sizes ($L = 1,2,4,6,8$) used during the merging
stage.

To compare algorithm performance, we used Dolan--Moré
performance profiles \cite{dolan2002benchmarking}
(Figs.~\ref{fig:algComparParamProf} and S1), defined as
follows. Given a set of solvers $\mathcal{S}$ and test
instances $\mathcal{D}$, the performance ratio of solver
$S\in\mathcal{S}$ on instance $D\in\mathcal{D}$ is
\[
r_{SD}=\frac{F_{SD}}{\min_{S'\in\mathcal{S}} F_{S'D}},
\]
where $F_{SD}$ is the objective value returned by $S$.
The performance profile
\[
\Phi_S(\tau)=\frac{1}{|\mathcal{D}|}
\sum_{D\in\mathcal{D}} \mathbf{1}\{r_{SD}\le\tau\}
\]
is the cumulative distribution function of these ratios
over $\mathcal{D}$. To compare profiles quantitatively, we use the area under
the curve (AUC)
\begin{equation}
\text{AUC}_S=\frac{1}{\tau_{\max}-1}
\int_{1}^{\tau_{\max}} \Phi_S(\tau)\,d\tau,
\end{equation}

\noindent
where $\tau_{\max}$ is the maximum ratio across all
solvers (Table~\ref{table:AUC}). Note that AUC here measures relative, not absolute, performance.

\begin{table}[]
\begin{center}
\begin{tabular}{|l|l|l|l|l|l}
\cline{1-5}
& \textbf{Hamming} & \textbf{Euclidean} & \textbf{RIM} & \textbf{RDSM} &  \\ \cline{1-5}
Linear height & 0.933            & 0.895              & 0.793        & 0.816         &  \\ \cline{1-5}
Log height & 0.972            & 0.934              & 0.814        & 0.895         &  \\ \cline{1-5}
Log height + 2-matching & 0.928            & 0.898              & 0.938        & 0.887         &  \\ \cline{1-5}
Log height + 4-matching & 0.799            & 0.875              & 0.929        & 0.838         &  \\ \cline{1-5}
Log height + 6-matching & 0.792            & 0.877              & 0.932        & 0.836         &  \\ \cline{1-5}
Log height + 8-matching & 0.791            & 0.870              & 0.925        & 0.835         &  \\ \cline{1-5}
\end{tabular}\caption{AUC for performance profiles of SDP relaxation and rounding under different algorithm settings.}\label{table:AUC}
\end{center}
\end{table}

We observed that a conservative linear upper bound on
tree depth, in addition to being more computationally
intensive, generally leads to lower accuracy than a
logarithmic bound. This can be explained by the fact that
bounding the depth $K$ in the SDP relaxation acts as a form of regularization: smaller $K$ limits how small the dyadic
weights $\zeta_i$ can become, reducing the model’s dynamic
range and improving conditioning. It also restricts the
relaxation’s ability to hide inconsistencies by pushing
leaves arbitrarily deep, while fewer levels limit the
accumulation of slack and relaxation looseness from
nestedness and balance constraints, yielding solutions
that are more coherent and easier to round. It also should be noted that limiting $K$ improves relaxation and
rounding, while still allowing recovery of deeper or less
balanced trees during agglomeration when warranted by the data.

The effect of the matching size $L$ (the number of leaves
merged at each agglomeration step) is less predictable
and more data-dependent, as merging multiple cherries can
have both positive and negative effects. When the SDP
produces well-separated candidates, merging several cherries in one step is beneficial: it performs multiple contractions jointly and is largely order-insensitive, avoiding a common failure mode of sequential greedy merging, where an early suboptimal choice prevents a later correct merge. Conversely, when cherry scores are ambiguous (i.e., have small margins or competing partners), merging many at once can reduce accuracy by committing to several uncertain contractions before re-solving the SDP. This limits the effect of the updated input and can amplify the impact of incorrect merges.

\begin{figure}
    \centering
    \subfigure{\includegraphics[width=0.49\textwidth]{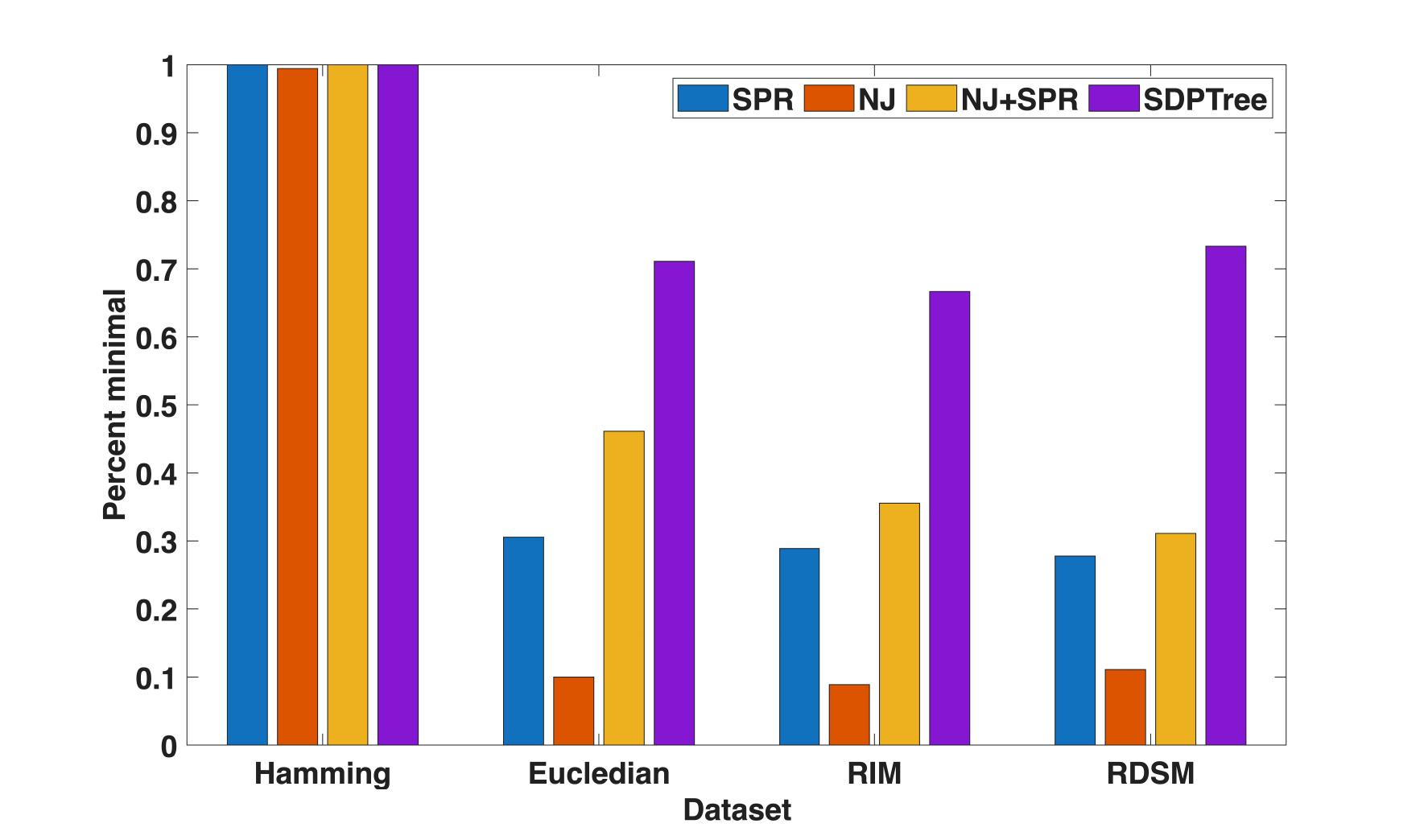}} 
    \subfigure{\includegraphics[width=0.49\textwidth]{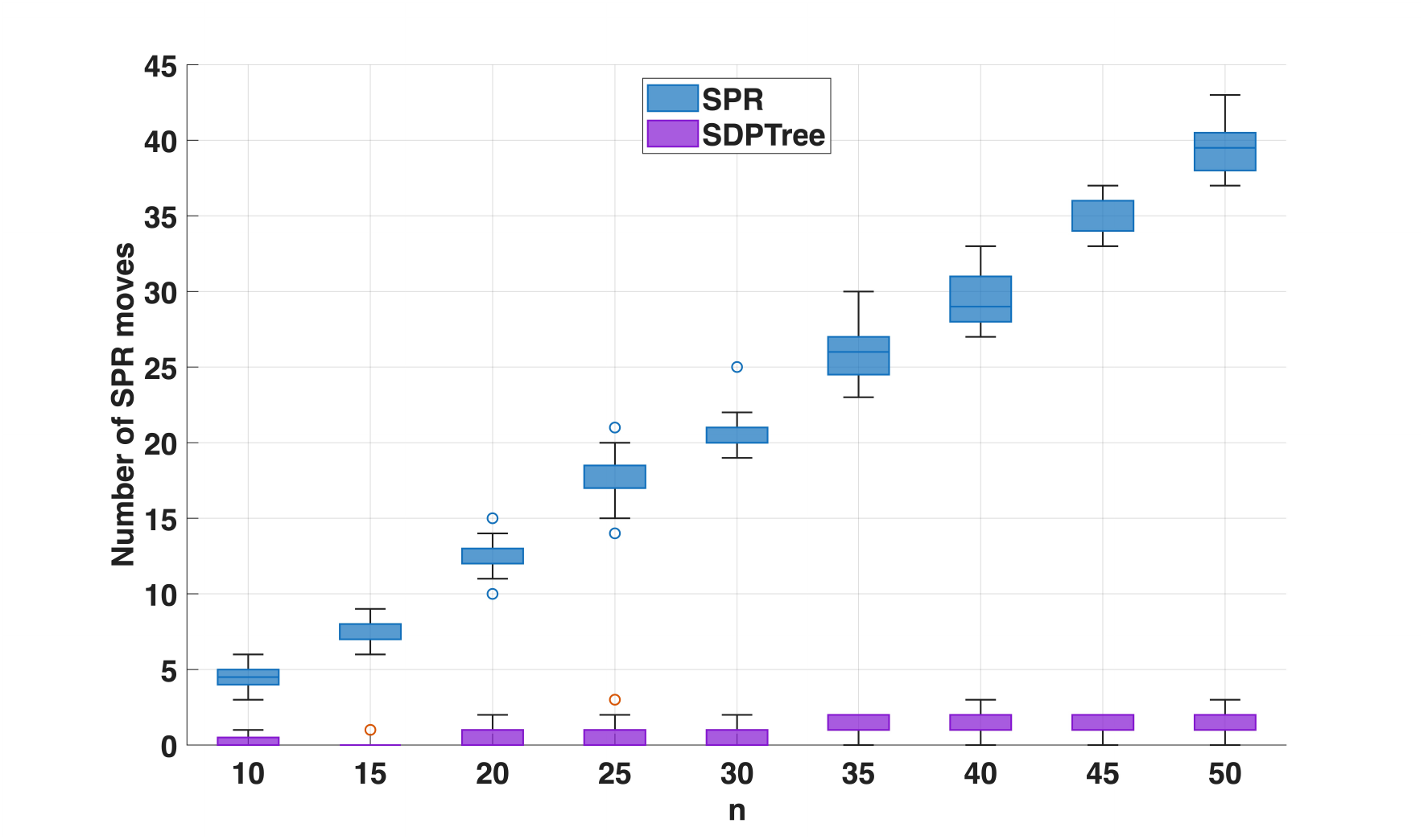}} 

    \caption{\footnotesize {\bf Left:} Joint comparison of four methods on simulated data. The y-axis shows the percentage of instances where a method achieved the best objective value among the four methods (ties allowed). {\bf Right:} Number of SPR moves for FastME (random initial tree) and SDPTree after SDP rounding for the Hamming dataset}
    \label{fig:algComparSim}
\end{figure}

The above relationships between algorithm settings are
largely consistent across input sizes (Figs.~S3--S5). The
matching size has a more pronounced effect for
$p$-rounding (Figs.~S1--S2), whereas for $s$-rounding
the number of maximally inseparable leaf pairs tends
to saturate and thus remains unchanged for large $L$.


In the next series of experiments, we compared SDPTree
with other methods. Based on the above results and the
trade-off between solution quality and running time (Fig.~\ref{fig:real}, right),
we selected the logarithmic depth bound with $L=2$
for benchmarking. Overall, SDPTree consistently
outperformed competing methods across
simulation settings (Fig.~\ref{fig:algComparSim}).
In pairwise comparison with the second-best method
(SPR with an NJ initial tree), SDPTree produced
better, identical, and worse solutions in $58.6\%$,
$22.5\%$, and $18.9\%$ of cases, respectively
($p<10^{-13}$, two-sided sign test). The advantage
of SDPTree tends to increase with input size
(Fig.~S6).

The exception is the Hamming distance dataset, where all
four algorithms almost always returned identical objective
values; this dataset was specifically designed to verify such behavior. Here, the
substitution rate was set to $p=0.05$, yielding nearly
additive matrices $D$ (with quartet splits matching the
generating tree in $99.2\%$ of cases on average). 
It is known that in such settings NJ with and without SDP adjustment produces optimal or near-optimal
solutions \cite{atteson1999performance}. The experiments
confirm that SDP relaxation with rounding preserves this guarantee. Indeed, SDP rounding landed very close to the optimal solution without
SPR in almost all cases: refinement required at most 2 SPR moves in $96.8\%$
of cases, including 0 moves in $39.4\%$ and 1 move in
$42.2\%$. The number of moves was nearly independent of
input size and significantly lower than for SPR with a
random initial tree ($p<10^{-40}$, two-sided sign test;
Fig.~\ref{fig:algComparSim} right).

A similar pattern was observed for Phylobench data
(Fig.~\ref{fig:real}). Across all three problem sizes, the SDPTree consistently outperforms the benchmark baselines, achieving the best objective value on the largest share of test instances and maintaining this advantage as $n$ increases ($p<10^{-84}$, two-sided sign test for SDPTree vs other algorithms). Overall, these results indicate that the SDP relaxation captures informative global structure that translates into high-quality trees after rounding, and, from the optimization perspective, the resulting solutions are  superior to standard heuristics and local search baselines across the tested sizes.


\begin{figure}
    \centering
    \subfigure{\includegraphics[width=0.49\textwidth]{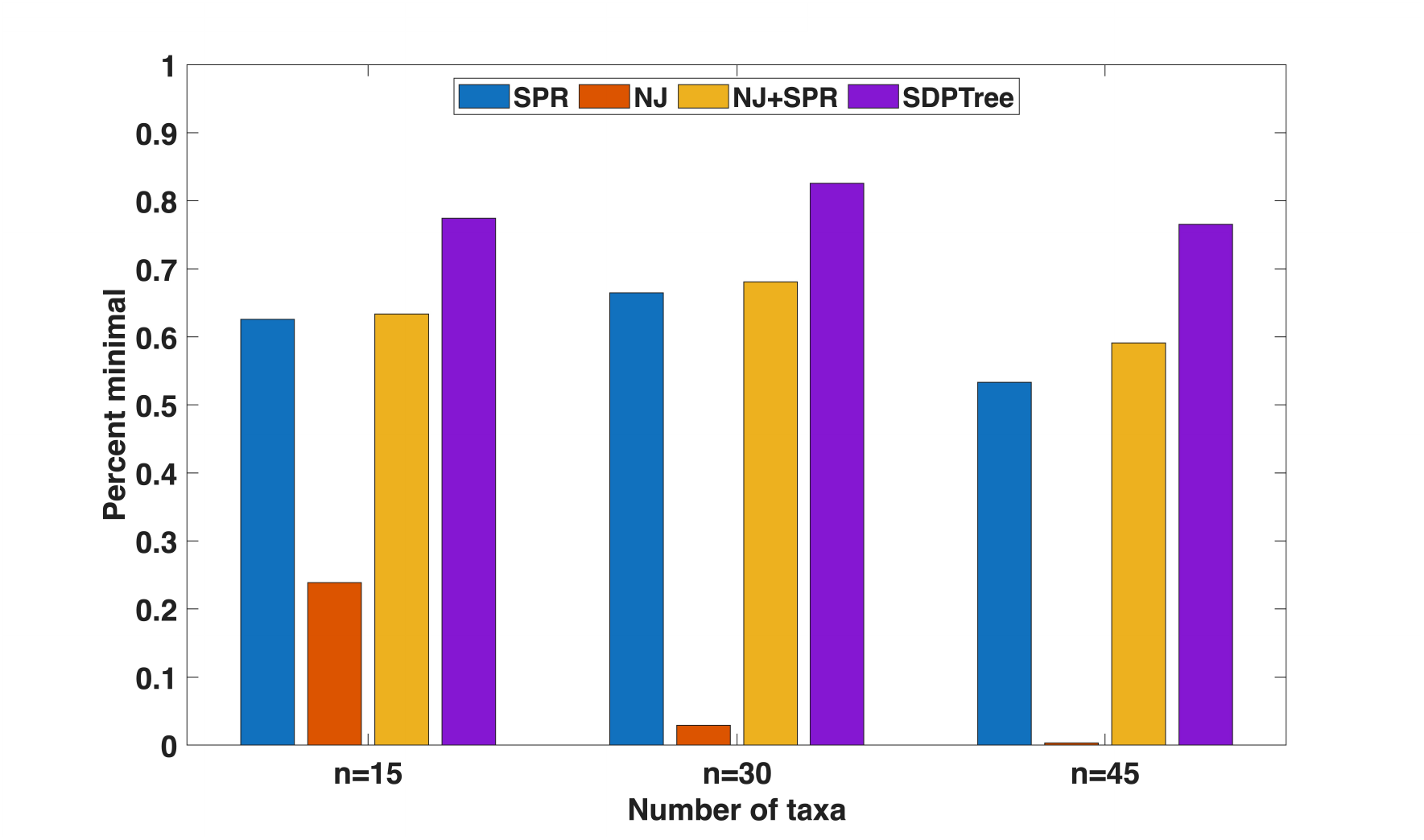}} 
    \subfigure{\includegraphics[width=0.49\textwidth]{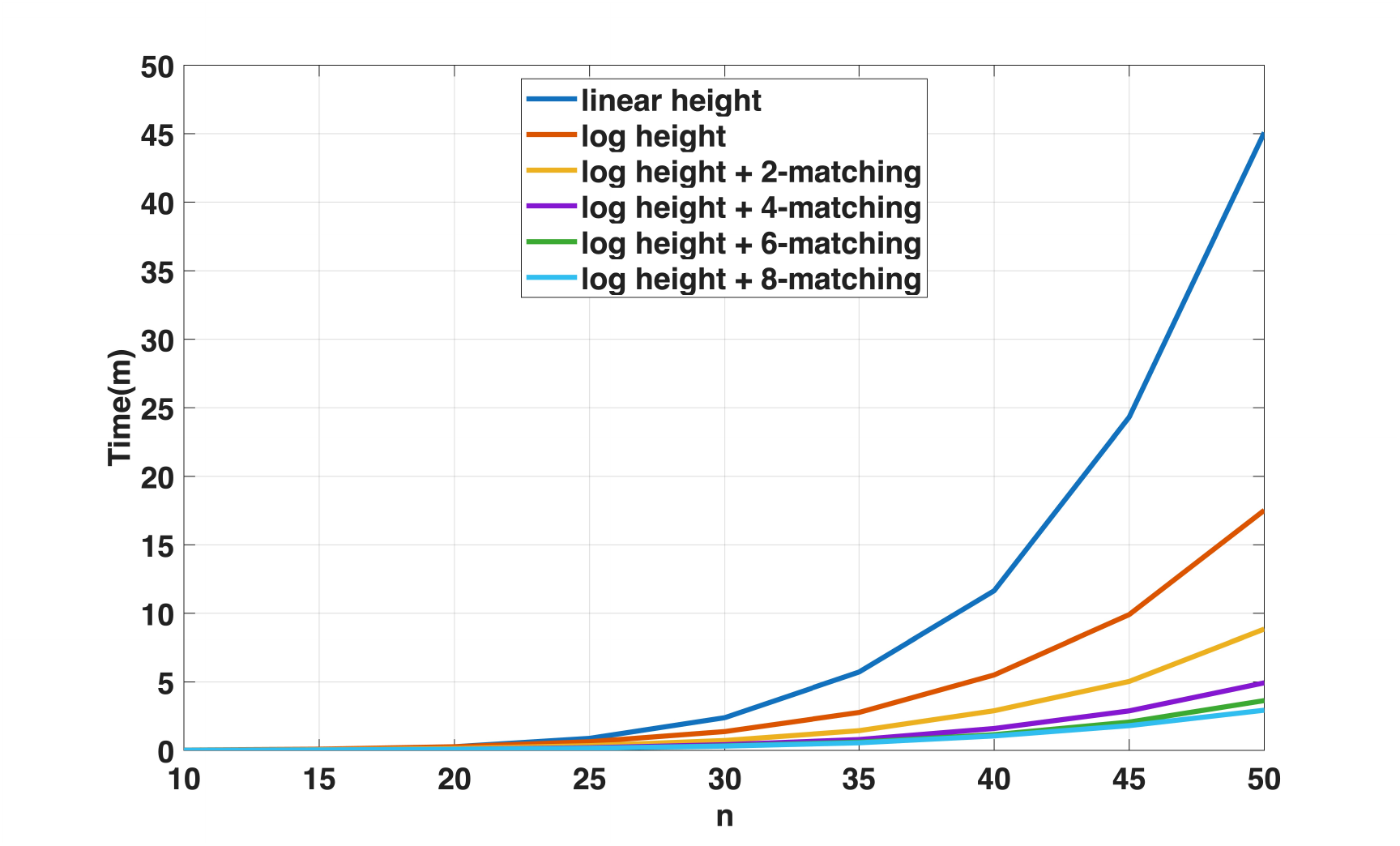}} 

    \caption{\footnotesize {\bf Left:} Joint comparison of four methods on Phylobench data. The y-axis shows the percentage of instances where a method achieved the best objective value among the four methods (ties allowed). {\bf Right:} running time of the single SDP relaxation + rounding step (in minutes)}
    \label{fig:real}
\end{figure}

\section{Discussion}

In this study, we proposed a novel algorithmic approach to phylogenetic inference based on Semidefinite Programming (SDP) and applied it to the Balanced Minimum Evolution (BME) problem,
one of the most widely used distance-based models. To the best of our knowledge, this is among the first attempts
to harness SDP, that has previously been successful in combinatorial optimization, for computational phylogenetics. 

The semidefinite relaxation strategy based on cut-based constraints developed here can potentially be transferred to other phylogenetic optimization problems. Many objectives of interest in phylogenetics can be expressed (exactly or approximately) as linear or convex functions of pairwise quantities induced by a tree, while the main combinatorial difficulty lies in enforcing that the underlying representation is consistent with a valid tree topology. In such settings, one can introduce lifted matrix variables encoding co-membership or split structure, impose tractable convex constraints capturing necessary properties of trees (e.g., hierarchical nesting/laminarity, balance or capacity conditions, and metric or ultrametric inequalities), and then tighten the relaxation by adding valid linear or conic cuts. This paradigm suggests SDP-based bounding and rounding frameworks for related minimum-evolution variants, least-squares tree fitting, quartet-consistency formulations, and ultrametric (molecular clock) models, where the relaxation provides both certificates of optimality gaps and informative fractional structure that can be exploited by downstream rounding procedures.

There is, however, room for improvement. The main limitation of the SDP approach, as with many mathematical programming methods, is running time and memory consumption: the relaxation introduces multiple positive semidefinite matrix variables, making generic interior-point methods expensive. Moreover, the relaxation may remain loose in regimes where the data are noisy or the depth/topology variables provide many degrees of freedom, leading to fractional solutions whose hierarchical signal is diffuse and therefore harder to round reliably. Several directions could improve both tightness and scalability: 

\begin{itemize}
    \item[(i)] developing stronger, problem-specific cuts (e.g., additional laminarity/compatibility constraints, objective-aware inequalities, and tighter envelope constraints linking matrix variables), so that tighter bounds can be obtained without uniformly inflating the model size;

    \item[(ii)] exploiting warm starts and model reuse to accelerate repeated solves in agglomerative pipelines, including adaptive termination tolerances tuned to rounding quality;

    \item[(iii)]  replacing the agglomerative leaf-merging scheme by a single SDP solve followed by SDP-based randomized projection rounding (e.g. Goemans–Williamson–type hyperplane rounding) producing a sequence of bipartitions that can be assembled into an unrooted tree, possibly repeated to sample multiple candidate solutions. 
\end{itemize}

\noindent
Implementing these improvements, as well as extending the SDP-based framework to related phylogenetic inference objectives, are promising directions for future research.

\paragraph{Acknowledgements.} This work was supported by National Science Foundation [grants 2415564, 2415562]. 

\bibliographystyle{plain}
\bibliography{refs}

\end{document}